\newtheorem{thm}{Theorem}[section]
\newtheorem{assumption}[thm]{Assumption}
\newtheorem{definition}[thm]{Definition}
\newtheorem{example}[thm]{Example}
\newtheorem{remark}[thm]{Remark}
\newenvironment{rem}{\begin{remark}\rm}{\end{remark}}
\newcommand{\E}{\mathbb{E}}
\title{An Elementary Proof of a Classical Information-Theoretic Formula}
\author{%
Xianming Liu$^{1}$, Ronit Bustin$^{2}$, Guangyue Han$^{3}$ and Shlomo Shamai$^{4}$\\
$^{1}$Huazhong University of Science and Technology, {\em email: xmliu@hust.edu.cn}\\
$^{2}$Technion-Israel Institute of Technology, {\em email: ronit.bustin@gmail.com}\\
$^{3}$The University of Hong Kong, {\em email: ghan@hku.hk}\\
$^{4}$Technion-Israel Institute of Technology, {\em email: sshlomo@ee.technion.ac.il}\\
}
\date{\today}
\begin{document} \maketitle

\begin{abstract}
A renowned information-theoretic formula by Shannon expresses the mutual information rate of a white Gaussian channel with a stationary Gaussian input as an integral of a simple function of the power spectral density of the channel input. We give in this paper a rigorous yet elementary proof of this classical formula. As opposed to all the conventional approaches, which either rely on heavy mathematical machineries or have to resort to some ``external'' results, our proof, which hinges on a recently proven sampling theorem, is elementary and self-contained, only using some well-known facts from basic calculus and matrix theory.
\end{abstract}

\section{Introduction} \label{Introduction}

Consider the following continuous-time white Gaussian channel
\begin{equation} \label{AWGN}
Y(t)=\int_0^t X(s) ds + B(t), \quad t \in \mathbb{R}^+,
\end{equation}
where $\{B(t): t \in \mathbb{R}^+\}$ denotes the standard Brownian motion, and the channel input $\{X(s): s \in \mathbb{R}\}$ is an independent stationary Gaussian process with power spectral density $f(\lambda)$. This paper is to give an elementary proof of the following classical information-theoretic formula (see, e.g., Theorem $6.7.1$
of~\cite{ih93})
\begin{equation} \label{Ihara}
\lim_{T \to \infty} \frac{1}{T} I(X_0^T; Y_0^T) = \frac{1}{4 \pi} \int_{-\infty}^{\infty} \log (1+ 2 \pi f(\lambda)) d \lambda.
\end{equation}

This renowned formula was first established by Shannon in his seminal work~\cite{shannon49} through a heuristic yet rather convincing spectrum-splitting argument and then treated more rigorously by numerous authors, predominantly using alternative channel formulations obtained via some orthogonal expansion representations in the relevant Hilbert space. Representative work in this direction include~\cite{huang-johnson-1962, huang-johnson-1963, hol64, ga68, ih93}, and the heart of all the approaches therein lies in a continuous-time version of the famed Szego's theorem (see, e.g., the theorem on page $139$ of~\cite{grenander58}). In a different direction, there have been efforts devoted to analyze continuous-time Gaussian channels using tools and techniques from stochastic calculus~\cite{du70, ka71, gu05, HanSong2016}, where the channel mutual information has been found to linked to an optimal linear filter. These links, together with well-known results from filtering theory~\cite{Wiener1949, YovitsJackson1955}, will conceivably recover (\ref{Ihara}).

It appear to us that all existing treatments either rely on heavy mathematical machineries or have to resort to some ``external'' results. By comparison, our proof, which hinges on a recently proven sampling theorem (Theorem $3.2$ in~\cite{LiuHan2018}), is elementary and self-contained, only using some well-known facts from basic calculus and matrix theory: it turns out that the aforementioned sampling theorem enables us to sidestep numerous complications that are otherwise present in the continuous-time regime and allows us to employ a spectral analysis of finite-dimensional matrices, rather than infinite-dimensional operators that some previous approaches would have to deal with. Moreover, as elaborated in Section~\ref{concluding-remarks}, our approach gives rise to a ``scalable'' version of Szego's theorem and naturally connects a continuous-time Gaussian channel to its sampled discrete-time versions, and thereby promising further applications in more general settings.

\section{A Heuristic Proof} \label{rough-ideas}

We first explain the aforementioned sampling theorem. For any given $T > 0$ and $n \in \mathbb{N}$, choose evenly spaced {\it sampling times} $t_i$, $i=0, 1, \dots, n$, such that $t_i~\footnote{Obviously each $t_i$ depends on $T$ and $n$, however we suppress the notational dependence for simplicity, which may apply to other notations in the paper as well.} = i T/n$ and let
$$
\Delta_{T, n} \triangleq \{t_0, t_1, \dots, t_n\}.
$$
Sampling the channel (\ref{AWGN}) over the time interval $[0, T]$ with respect to
$\Delta_{T, n}$, we obtain its sampled discrete-time version as
follows:
\begin{equation}  \label{after-sampling}
Y(t_{i})=\int_0^{t_{i}} X(s) ds + B(t_{i}), \quad i=0, 1, \ldots, n.
\end{equation}
Loosely speaking, Theorem $2.1$ in~\cite{LiuHan2018} says that as the above sampling gets increasingly finer, the mutual information of the discrete-time channel (\ref{after-sampling}) will converge to that of the original continuous-time channel (\ref{AWGN}).

Note that the mutual information of the channel (\ref{after-sampling}) can be computed as
\begin{align}  \label{mutual-information-det}
I(X_0^T; Y(\Delta_{T, n})) & =I(X_0^T; \{Y(t_{i})-Y(t_{i-1})\}_{i=1}^n) \nonumber\\
&=H(\{Y(t_{i})-Y(t_{i-1})\}_{i=1}^n) - H(\{B(t_{i})-B(t_{i-1})\}_{i=1}^n) \nonumber\\
&= \frac{1}{2} \log \det \left(I_n+ A_{T, n} \right),
\end{align}
where $I_n$ is the $n \times n$ identity matrix and $A_{T, n}$ is an $n \times n$ matrix whose $(i, j)$-th entry is defined as
$$
A_{T, n}(i, j)=\E\left[\frac{n}{T} \int_{t_i}^{t_{i+1}} X(s) ds \int_{t_j}^{t_{j+1}} X(s) ds\right].
$$
It then follows from the stationarity of $\{X(s)\}$ that
$$
A_{T, n}(i, j) = \gamma_{j-i},
$$
where, setting $t_{-i} = - i T/n$ for $i=1, 2, \dots, n-1$, we have
defined
$$
\gamma_l \triangleq \E\left[\frac{n}{T} \int_{t_0}^{t_{1}} X(s) ds
\int_{t_l}^{t_{l+1}} X(s) ds\right], \quad l=-(n-1), \dots, n-1.
$$
Noting that $A_{T, n}$ is a Hermitian (and Toeplitz) matrix and letting $\psi_1, \psi_2, \dots, \psi_n$ denote all its eigenvalues, we have
\begin{equation} \label{no-hat}
I(X_0^T; Y(\Delta_{T, n})) = \frac{1}{2} \sum_{m=1}^n \log (1+\psi_m).
\end{equation}

Now consider an $n \times n$ matrix $\hat{A}_{T, n}$ defined by
$$
\hat{A}_{T, n}(i, j) = \hat{\gamma}_{j-i},
$$
where $\hat{\gamma}_0 \triangleq \gamma_0$ and $\hat{\gamma}_l
\triangleq \gamma_l + \gamma_{n-l}$, $\hat{\gamma}_{-l} \triangleq
\gamma_{-l} + \gamma_{-n+l}$ for $l=1, 2, \dots, n-1$. Obviously,
$\hat{A}_{T, n}$ is an $n \times n$ circulant matrix whose
eigenvalues $\hat{\psi}_1, \hat{\psi}_1, \cdots, \hat{\psi}_1$ can
be readily computed as
\begin{equation} \label{hat-psi-m}
\hat{\psi}_m= \sum_{k=0}^{n-1} \hat{\gamma}_k e^{-2\pi i m k/n}.
\end{equation}
Now, for large $n$, approximating $\int_{t_k}^{t_{k+1}} X(s) ds$ by $X(t_k) \frac{T}{n}$, we have, for $0 < m < n/2$,
\begin{align}
\hat{\psi}_m & \approx \E[X^2(0)] \frac{T}{n}+\sum_{k=1}^{n-1} \E[X(t_0) X(t_k)] e^{-2\pi i (t_k-t_0) \frac{m}{T}} \frac{T}{n} + \sum_{k=1}^{n-1} \E[X(t_0) X(t_{n-k})] e^{-2\pi i (t_0-t_{n-k}) \frac{m}{T}} \frac{T}{n} \nonumber \\
& \approx 2\pi f(2\pi m/T), \label{approx-psd}
\end{align}
and for $n/2 < m < n$,
\begin{align}
\hat{\psi}_m & \approx \E[X^2(0)] \frac{T}{n}+\sum_{k=1}^{n-1} \E[X(t_0) X(t_k)] e^{-2\pi i (t_k-t_0) \frac{n-m}{T}} \frac{T}{n} + \sum_{k=1}^{n-1} \E[X(t_0) X(t_{n-k})] e^{-2\pi i (t_0-t_{n-k}) \frac{n-m}{T}} \frac{T}{n} \nonumber \\
& \approx 2\pi f(-2\pi (n-m)/T), \label{approx-psd}
\end{align}
where we have used the definition
$$
f(\lambda)= \frac{1}{2\pi} \int_{-\infty}^{\infty} R(\tau) e^{-i \tau \lambda} d\tau,
$$
where $R(\tau) = \E[X(0) X(\tau)]$ is the autocorrelation function of $\{X(s)\}$. Adapting some well-known arguments for establishing {\em aysmptotic equivalence} (see, e.g.,~\cite{grenander58} or~\cite{Gray2006}), we can prove that for large $T$ and large $n$,
\begin{equation} \label{no-hat-equals-hat}
\frac{\sum_{m=1}^n \log(1+\psi_m)}{T} \approx \frac{\sum_{m=1}^n \log(1+\hat{\psi}_m)}{T}.
\end{equation}
Now, collecting all the results above, we conclude that, for appropriately chosen large $T$ and large $n$,
\begin{align*}
\frac{1}{T} I(X_0^T; Y_0^T) & \stackrel{(a)}{\approx} \frac{1}{T} I(X_0^T; Y(\Delta_{T, n})) \\
& \stackrel{(b)}{=} \frac{1}{2} \log \det \left(I_n+ A_{T, n} \right)\\
& \stackrel{(c)}{=} \frac{\sum_{m=1}^n \log (1+\psi_m)}{2T}\\
& \stackrel{(d)}{\approx} \frac{\sum_{m=1}^n \log (1+\hat{\psi}_m)}{2T}\\
& \stackrel{(e)}{\approx} \sum_{m=-n/2}^{n/2} \log (1+2\pi f(2 \pi m/T)) \frac{1}{2T} \\
& \stackrel{(f)}{\approx} \frac{1}{4\pi}\int_{-\infty}^{\infty} \log (1+2 \pi f(\lambda)) d \lambda,
\end{align*}
where $(a)$ follows from Theorem~\ref{sampling-theorem-1}, $(b)$ follows from (\ref{mutual-information-det}), $(c)$ follows from (\ref{no-hat}), $(d)$ follows from (\ref{no-hat-equals-hat}), $(e)$ follows from (\ref{approx-psd}) and $(f)$ follows from the definition of the integral, establishing the formula (\ref{Ihara}).

The above proof is by no means rigourous, but, as elaborated in the next section, a refinement with some elementary $\varepsilon$-$\delta$ arguments and Fourier analysis arguments will certainly make it so to reach (\ref{Ihara}), which yields a rigorous proof of the classical formula.

\section{A Rigorous Proof}

First of all, we rigorously state our theorem.
\begin{thm} \label{main-theorem}
Assume that both $f(\lambda)$ and $R(\tau)$ are Lesbegue integrable over $\mathbb{R}$. Then,
\begin{equation}
\lim_{T \to \infty} \frac{1}{T} I(X_0^T; Y_0^T) = \frac{1}{4 \pi} \int_{-\infty}^{\infty} \log (1+ 2 \pi f(\lambda)) d \lambda.
\end{equation}
\end{thm}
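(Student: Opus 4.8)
The plan is to make rigorous each of the six approximation steps $(a)$--$(f)$ from the heuristic argument, turning each ``$\approx$'' into an $\varepsilon$-$\delta$ estimate. The overall structure is a double limit: first let $n \to \infty$ for fixed $T$ to pass from the sampled channel to the continuous-time channel, then let $T \to \infty$. The key organizing idea is that for fixed $T$, the quantity $\frac{1}{2}\log\det(I_n + A_{T,n}) = \frac{1}{T}I(X_0^T; Y(\Delta_{T,n}))$ converges as $n \to \infty$ to $\frac{1}{T}I(X_0^T;Y_0^T)$ by the sampling theorem (Theorem~\ref{sampling-theorem-1}); so it suffices to control $\lim_{T\to\infty}\lim_{n\to\infty} \frac{1}{2T}\sum_{m=1}^n \log(1+\psi_m)$ and show it equals the target integral. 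I would first establish, for each fixed $T$, the exact limit of the circulant eigenvalues: as $n\to\infty$, $\hat\psi_m \to S_T(m) := \sum_{l\in\mathbb{Z}} R^{\#}_T(l)$-type sums, and more usefully that $\frac{1}{n}\sum_m g(\hat\psi_m)$ converges to a Riemann-sum limit; the approximation $\int_{t_k}^{t_{k+1}}X(s)\,ds \approx X(t_k)\frac{T}{n}$ must be controlled using Lebesgue integrability of $R$ (the error in $\gamma_l$ is an average of $R$ over a small square, which is uniformly small in an $L^1$-averaged sense).

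The second block of work is the asymptotic-equivalence step $(d)$: showing $\frac{1}{T}\sum_m \log(1+\psi_m) - \frac{1}{T}\sum_m \log(1+\hat\psi_m) \to 0$. Since $A_{T,n}$ is Toeplitz with symbol generated by the (truncated) autocorrelation sequence $\gamma_l$ and $\hat A_{T,n}$ is its natural circulant ``closure,'' the difference $A_{T,n} - \hat A_{T,n}$ has entries $-\gamma_{n-l}$ in a corner-triangular pattern, and one shows $\frac{1}{n}\|A_{T,n}-\hat A_{T,n}\|_F^2 \to 0$ using $\sum_l \gamma_l^2 < \infty$ (which follows from $R \in L^1 \cap$ boundedness, or from $f \in L^1$). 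Combined with a uniform lower bound $\psi_m, \hat\psi_m \ge 0$ (positive semidefiniteness, since these are covariance matrices) and the Lipschitz-type bound $|\log(1+x)-\log(1+y)| \le |x-y|$ on $[0,\infty)$, a Cauchy--Schwarz argument gives $\frac{1}{n}\sum_m |\log(1+\psi_m)-\log(1+\hat\psi_m)| \le \big(\frac{1}{n}\|A_{T,n}-\hat A_{T,n}\|_F^2\big)^{1/2} \to 0$; multiplying by $n/T$ this is an $O(n/(nT)\cdot o(1))$ bound — here I must be careful that the Frobenius estimate decays fast enough relative to $1/T$, i.e. I need $\frac{1}{T}\sum_{l}|\gamma_{n-l}| \to 0$ as $n,T\to\infty$ jointly, which again is where integrability of $R$ enters. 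Then step $(e)$ identifies $\frac{1}{2T}\sum_m \log(1+\hat\psi_m)$, via the limiting eigenvalue values $\hat\psi_m \approx 2\pi f(2\pi m/T)$, with the Riemann sum $\frac{1}{2T}\sum_{|m|<n/2}\log(1+2\pi f(2\pi m/T))$, and step $(f)$ recognizes this as $\frac{1}{4\pi}\int \log(1+2\pi f(\lambda))\,d\lambda$ provided the Riemann sums converge — which holds by dominated convergence using $\log(1+2\pi f) \le 2\pi f \in L^1$ together with the spacing $2\pi/T \to 0$.

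The main obstacle I anticipate is the \emph{interchange of the two limits} — doing $n\to\infty$ first (to invoke the sampling theorem) while the remaining estimates $(d)$--$(f)$ naturally want $T\to\infty$. A clean way around this is to prove the intermediate identity $\frac{1}{T}I(X_0^T;Y_0^T) = \lim_{n\to\infty}\frac{1}{2T}\sum_m\log(1+\hat\psi_m)$ for each fixed $T$ (combining the sampling theorem with step $(d)$ in the $n\to\infty$ limit for fixed $T$, which requires the Frobenius bound to vanish as $n\to\infty$ with $T$ fixed — true since $\gamma_{n-l}\to 0$), thereby collapsing to a single limit $T\to\infty$ of a cleanly-described quantity $\Phi(T) := \lim_n \frac{1}{2T}\sum_m \log(1+\hat\psi_m)$. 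Then one is left to show $\Phi(T)\to \frac{1}{4\pi}\int\log(1+2\pi f)\,d\lambda$, a purely Fourier-analytic statement. A secondary technical point is justifying the replacement of the exact circulant eigenvalues $\hat\psi_m$ by the idealized values $2\pi f(2\pi m/T)$ uniformly enough to survive summation over $m$ and division by $T$; this needs the Fourier inversion $R(\tau)=\int f(\lambda)e^{i\tau\lambda}\,d\lambda$ (valid a.e. under the integrability hypotheses) and a careful bookkeeping of the $O(1/n)$ discretization error accumulated over $n$ terms — manageable but the place where most of the $\varepsilon$-$\delta$ bookkeeping lives.
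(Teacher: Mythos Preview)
Your overall architecture matches the paper's: reduce to the sampled channel via Theorem~\ref{sampling-theorem-1}, replace the Toeplitz matrix $A_{T,n}$ by the circulant $\hat A_{T,n}$, and identify the circulant spectrum with the power spectral density. The gap is in your mechanism for step~$(d)$. Your bound
\[
\frac{1}{n}\sum_{m}\bigl|\log(1+\psi_m)-\log(1+\hat\psi_m)\bigr|
\;\le\;\Bigl(\tfrac{1}{n}\|A_{T,n}-\hat A_{T,n}\|_F^2\Bigr)^{1/2}
\]
(via Lipschitz plus Hoffman--Wielandt and Cauchy--Schwarz) is correct, but after multiplying by $n/T$ it becomes
\[
\frac{1}{T}\sum_m\bigl|\cdots\bigr|\;\le\;\sqrt{\tfrac{n}{T}}\cdot\sqrt{\tfrac{\|A_{T,n}-\hat A_{T,n}\|_F^2}{T}}.
\]
The second factor can be made small (this is the paper's Step~3), but since the sampling theorem forces $T/n\to 0$, the first factor $\sqrt{n/T}\to\infty$. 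A direct computation with $\gamma_k\approx (T/n)R(kT/n)$ shows $\|A-\hat A\|_F^2\sim (T/n)\int_0^T \tau R(\tau)^2\,d\tau$, so $\sqrt{n/T}\cdot\|A-\hat A\|_F/\sqrt{T}$ stays bounded away from zero even for fixed $T$ as $n\to\infty$; your ``clean way around'' of taking $n\to\infty$ first therefore does not close the gap either. The Lipschitz bound throws away the cancellation in $\sum_m[\log(1+\psi_m)-\log(1+\hat\psi_m)]$ --- note for instance that the first-order term $\sum_m(\psi_m-\hat\psi_m)=\mathrm{tr}(A-\hat A)=0$ exactly.

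The paper's remedy is to replace the Lipschitz estimate by Weierstrass polynomial approximation of $\log(1+x)$ on the bounded spectral interval (Step~1 gives a uniform bound on all eigenvalues), reducing to control of $\bigl|\mathrm{tr}(A_{T,n}^k-\hat A_{T,n}^k)\bigr|/T$ for each fixed $k$. This is telescoped and bounded by
\[
\Bigl(\tfrac{\mathrm{tr}(A^{k-1}(A-\hat A))}{T}\Bigr)^2
\le \|A\|_2^{2(k-2)}\cdot\frac{\|A\|_F^2}{T}\cdot\frac{\|A-\hat A\|_F^2}{T},
\]
which involves \emph{two} Frobenius-norm-over-$T$ factors and no stray $\sqrt{n}$; Steps~1--3 then make this small. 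The same polynomial reduction also handles your step~$(e)$: rather than justifying the pointwise substitution $\hat\psi_m\approx 2\pi f(2\pi m/T)$ uniformly in $m$ (delicate near $m\approx n/2$), the paper computes $\sum_m\hat\psi_m^q/T$ directly and identifies the limit as $R\ast\cdots\ast R(0)=\frac{1}{2\pi}\int(2\pi f)^q$. So your outline is right, but the Lipschitz/Cauchy--Schwarz step must be replaced by the polynomial/trace argument to avoid the $\sqrt{n/T}$ loss.
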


\begin{rem}
It is well known that $f(\lambda)$ and $R(\tau)$ are a Fourier transform pair, and the integrability of one implies that the other one is uniformly bounded and uniformly continuous over $\mathbb{R}$. Moreover, it is easy to verify that $f(\lambda)$ is non-negative, and both $f(\lambda)$ and $R(\tau)$ are symmetric.
\end{rem}

We next state the sampling theorem that will be used in our proof, which is a weakened version of Theorem $2.1$ in~\cite{LiuHan2018} that holds true in a more general setting where sampling times may not be evenly spaced, and moreover, feedback and memory are possibly involved.
\begin{thm} \label{sampling-theorem-1}
For any fixed $T > 0$ and any sequence $\{\Delta_{T, n_k}: k \in \mathbb{N}\}$ satisfying $\Delta_{T, n_k} \subset \Delta_{T, n_{k+1}}$ for any feasible $k$, we have
$$
\lim_{k \to \infty} I(X_0^T; Y(\Delta_{T, n_k}))=I(X_0^T; Y_0^T),
$$
where $Y(\Delta_{T, n_k}) \triangleq \{Y(t_{0}), Y(t_{1}), \ldots, Y(t_{n_k})\}$.
\end{thm}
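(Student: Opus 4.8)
The plan is to prove the two inequalities
\[
\limsup_{k\to\infty} I(X_0^T; Y(\Delta_{T,n_k})) \le I(X_0^T; Y_0^T), \qquad I(X_0^T; Y_0^T) \le \liminf_{k\to\infty} I(X_0^T; Y(\Delta_{T,n_k})),
\]
the first of which is elementary while the second carries the content of the theorem. For the first, note that each $Y(t_i)$ is a coordinate of the sample path $Y_0^T$, so $Y(\Delta_{T,n_k})$ is a measurable function of $Y_0^T$ and the data processing inequality gives $I(X_0^T; Y(\Delta_{T,n_k})) \le I(X_0^T; Y_0^T)$ for every $k$; moreover, since $\Delta_{T,n_k}\subset\Delta_{T,n_{k+1}}$, the vector $Y(\Delta_{T,n_k})$ is a sub-vector of $Y(\Delta_{T,n_{k+1}})$, so adjoining coordinates shows $k\mapsto I(X_0^T; Y(\Delta_{T,n_k}))$ is non-decreasing. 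Hence the limit $L := \lim_k I(X_0^T; Y(\Delta_{T,n_k}))$ exists in $[0,\infty]$ and $L\le I(X_0^T; Y_0^T)$. (As in (\ref{mutual-information-det}) it is convenient to pass to the increments $Y(t_i)-Y(t_{i-1})$, rendering each finite-dimensional mutual information a Gaussian log-determinant, though monotonicity does not need this.)

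For the reverse inequality I would combine a martingale-convergence argument with the description of mutual information as a supremum over finite quantizations. Set $\mathcal{G}_k := \sigma(Y(t): t\in\Delta_{T,n_k})$ and $\mathcal{G}_\infty := \sigma\bigl(\bigcup_k \mathcal{G}_k\bigr) = \sigma(Y(t): t\in D)$, where $D := \bigcup_k \Delta_{T,n_k}$, so that $\mathcal{G}_k\uparrow\mathcal{G}_\infty$. The key lemma is that $I(X_0^T;\mathcal{G}_k)\to I(X_0^T;\mathcal{G}_\infty)$: writing $I(X_0^T;\mathcal{G}_\infty)$ as the supremum of $I(X_0^T;\mathcal{Q})$ over finite measurable quantizations $\mathcal{Q}$ of $\mathcal{G}_\infty$, one approximates each such $\mathcal{Q}$ by a $\mathcal{G}_k$-measurable quantization $\mathcal{Q}_k$ for $k$ large, using that conditional expectations on $\mathcal{G}_k$ converge ($L^1$ and a.s.) to those on $\mathcal{G}_\infty$ by Doob's theorem, together with lower semicontinuity of $I(X_0^T;\cdot)$; then $I(X_0^T;\mathcal{G}_k)\ge I(X_0^T;\mathcal{Q}_k)$ and passing to the limit and taking the sup over $\mathcal{Q}$ gives $\liminf_k I(X_0^T;\mathcal{G}_k)\ge I(X_0^T;\mathcal{G}_\infty)$, the reverse being clear. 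It then remains to identify $\mathcal{G}_\infty$ with $\sigma(Y_0^T)$: since $D$ is dense in $[0,T]$ — this is where one uses that the partitions genuinely refine, i.e. $\mathrm{mesh}(\Delta_{T,n_k})\to 0$ — and since $t\mapsto Y(t)=\int_0^t X(s)\,ds + B(t)$ is almost surely continuous, the whole path $Y_0^T$ is recovered from $\{Y(t): t\in D\}$ by continuous extension off a $\PX$-null set, so $\mathcal{G}_\infty = \sigma(Y_0^T)$ up to null sets and $I(X_0^T;\mathcal{G}_\infty)=I(X_0^T;Y_0^T)$. Combined with the first part, $L = I(X_0^T; Y_0^T)$.

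I expect the main obstacle to be the measure-theoretic lemma $I(X_0^T;\mathcal{G}_k)\to I(X_0^T;\mathcal{G}_\infty)$ for an increasing filtration; to keep the argument genuinely self-contained one either invokes the standard quantization/supremum characterization of mutual information together with Doob's martingale convergence theorem, or — exploiting the special Gaussian structure here — replaces it by a direct estimate showing that the conditional mutual information $I\bigl(X_0^T; Y_0^T \mid Y(\Delta_{T,n_k})\bigr)$ tends to $0$ as the mesh shrinks, via explicit covariance computations and the uniform continuity of $R(\tau)$ recorded in the Remark. A secondary point is to secure the density of $D$ in $[0,T]$, which should be read into the hypothesis on the sampling sequence (for evenly spaced times, nestedness forces $n_k\mid n_{k+1}$, and $n_k\to\infty$ then yields density).
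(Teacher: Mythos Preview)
The paper does not actually prove this statement: it merely records Theorem~\ref{sampling-theorem-1} as a weakened version of Theorem~2.1 in \cite{LiuHan2018} and then uses it as a black box in the proof of Theorem~\ref{main-theorem}. So there is no ``paper's own proof'' to compare your proposal against.

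As for the proposal itself, the overall architecture is sound and is in fact the standard route to results of this type. The upper bound and monotonicity are immediate from data processing as you say. For the lower bound, the lemma ``$I(X_0^T;\mathcal{G}_k)\uparrow I(X_0^T;\mathcal{G}_\infty)$ when $\mathcal{G}_k\uparrow\mathcal{G}_\infty$'' is a well-known continuity property of mutual information (it appears, for instance, in Pinsker and in Dobrushin's treatment), and your sketch via the supremum-over-quantizations characterization together with Doob's martingale convergence is exactly how it is usually proved; the alternative you mention via explicit Gaussian covariance estimates on $I(X_0^T;Y_0^T\mid Y(\Delta_{T,n_k}))$ is closer in spirit to what \cite{LiuHan2018} does in its more general setting. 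The identification $\mathcal{G}_\infty=\sigma(Y_0^T)$ up to null sets via almost-sure continuity of $t\mapsto Y(t)$ and density of $D$ is correct. Your caveat on density is the right one: the statement as written does not literally force $n_k\to\infty$ (a constant sequence $n_k\equiv n$ satisfies the nesting hypothesis trivially), so one must read ``$n_k\to\infty$'' into the hypothesis, which the paper clearly intends since it speaks of the sampling getting ``increasingly finer.''
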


We are now ready to give the proof of our main result.

\begin{proof}[Proof of Theorem~\ref{main-theorem}]
Our proof consists of the following several steps.

{\bf Step 1.} In this step, we show that both $\|A_{T, n}\|_2$ and $\|\hat{A}_{T, n}\|_2$ are bounded from above uniformly over all $T > 0$ and $n \in \mathbb{N}$, namely, there exists $C > 0$ such that for all $T > 0$ and $n \in \mathbb{N}$, $\|A_{T, n}\|_2, \|\hat{A}_{T, n}\|_2 \leq C$. Here $\|\cdot\|_2$ denotes the operator norm induced by the $L^2$-norm for vectors.

It is straightforward to verify ({\it cf.} the proof of Lemma $4.1$
in~\cite{Gray2006}) that
$$
\|A_{T, n}\|_2 = \sup_{x \in \mathbb{R}^n: \|x\|_2=1} x A_{T, n} x^t
\leq \|g_{T, n}(\theta)\|_{\infty},
$$
where $\|\cdot\|_{\infty}$ denotes the $L^{\infty}$-norm and
$$
g_{T, n}(\theta) \triangleq \sum_{l=-(n-1)}^{n-1} \gamma_l e^{i l
\theta}.
$$
So, to establish the uniform boundedness of $\|A_{T, n}\|_2$, it
suffices to prove that $|g_{T, n}(\theta)|$ is bounded from above
uniformly all $\theta$, $T$ and $n$. Towards this end, we note that for any
feasible $l_1, l_2$,
\begin{align} \label{l1l2}
\sum_{l=l_1}^{l_2} |\gamma_l| &= \frac{n}{T} \left|\sum_{l=l_1}^{l_2} \E\left[ \int_{t_0}^{t_1} X(s) ds \int_{t_l}^{t_{l+1}} X(s) ds \right]\right| \nonumber \\
                  &= \frac{n}{T} \left|\sum_{l=l_1}^{l_2} \int_{t_0}^{t_1} \int_{t_l}^{t_{l+1}} \E[X(u) X(v)] dv du \right| \nonumber \\
                  &\leq \frac{n}{T} \int_{t_0}^{t_1} \left(\sum_{l=l_1}^{l_2} \int_{t_l}^{t_{l+1}} |R(v-u)| dv \right) du \nonumber \\
                  &\leq \frac{n}{T} \int_{t_0}^{t_1} \left(\int_{t_{l_1}}^{t_{l_2+1}} |R(v-u)| dv \right) du \nonumber \\
                  &\leq \frac{n}{T} \int_{t_0}^{t_1} \left(\int_{t_{l_1-1}}^{t_{l_2+1}} |R(\tau)| d\tau \right) du \nonumber \\
                  &= \int_{t_{l_1-1}}^{t_{l_2+1}} |R(\tau)| d\tau,
\end{align}
which immediately implies that
\begin{equation} \label{sum-of-gamma}
\sum_{l=-(n-1)}^{n-1} |\gamma_l| \leq \int_{-\infty}^{\infty}
|R(\tau)| d\tau,
\end{equation}
which, together with (\ref{hat-psi-m}), further implies that for any $m$,
$$
\hat{\psi}_m \leq 2 \int_{-\infty}^{\infty} |R(\tau)| d\tau.
$$
Note that a similar argument as above yields that for all $\theta$, $T$ and $n$,
$$
|g_{T, n}(\theta)| = \frac{n}{T} \left|\sum_{l=-(n-1)}^{n-1}
\E\left[ \int_{t_0}^{t_1} X(s) ds \int_{t_l}^{t_{l+1}} X(s) ds e^{i
l \theta}\right]\right| \leq \int_{-\infty}^{\infty} |R(\tau)|
d\tau,
$$
which implies the uniform boundedness of $\|A_{T, n}\|_2$, and
moreover, together with (\ref{hat-psi-m}), that of $\|\hat{A}_{T,
n}\|_2$.

{\bf Step 2.} In this step, we show that both $\|A_{T, n}\|_F^2/T$
and $\|\hat{A}_{T, n}\|_F^2/T$ are bounded from above uniformly over
all $T > 0$ and $n \in \mathbb{N}$. Here $\|\cdot\|_F$ denotes the Frobenious norm.

To prove the uniform boundedness of $\|A_{T, n}\|_F^2/T$, note that
$$
\frac{\|A_{T, n}\|_F^2}{T} = \frac{\sum_{k=-(n-1)}^{n-1} (n-|k|)
\gamma_k^2}{T} \leq \sum_{k=-(n-1)}^{n-1} (1-k/n) |\gamma_k|
\frac{\max\{|\gamma_k|\}}{T/n} \leq \int_{-\infty}^{\infty}
|R(\tau)| d\tau \,\, \|R\|_{\infty},
$$
where, for the last inequality, we have used (\ref{sum-of-gamma}).

A similar argument can be used to establish the uniform boundedness of $\|\hat{A}_{T, n}\|_F^2/T$.

{\bf Step 3.} In this step, we show that one can first fix a large
enough $T$ and then choose a large enough $n$ such that $\|A_{T,
n}-\hat{A}_{T, n}\|_F^2/T$ is arbitrarily small; more precisely, for any
$\varepsilon > 0$, there exists $T_0 > 0$ such that for any $T \geq
T_0$, there exists $n_0 > 0$ such that for all $n \geq n_0$,
$\|A_{T, n}-\hat{A}_{T, n}\|_F^2/T \leq \varepsilon$.

Towards this goal, we first note that
$$
\|A_{T, n}-\hat{A}_{T, n}\|_F^2 = \sum_{k=-(n-1)}^{n-1} (n-|k|) (\gamma_k-\hat{\gamma}_k)^2 \leq 2 \sum_{k=1}^{n-1} (n-k) \gamma_{n-k}^2 = 2 \sum_{k=1}^{n-1} k \gamma_{k}^2.
$$
In light of the integrability of $R(\tau)$, for any given $\varepsilon' > 0$, there exists $\tau_0 > 0$ such that
\begin{equation}
\int_{\tau_0}^{\infty} |R(\tau)| d\tau < \varepsilon'.
\end{equation}
Now, it can be easily verified that for any given $\varepsilon' > 0$, we can first fix a large enough $T$ and then choose large enough $n > 0$ such that $t_{\lfloor \varepsilon' n \rfloor} \geq \tau_0$, and furthermore,
$$
\sum_{k=\lfloor \varepsilon' n \rfloor}^{\infty} |\gamma_k| \stackrel{(a)}{\leq} \int_{\tau_0-T/n}^{\infty} |R(\tau)| d\tau < \varepsilon' \mbox{ and } \sum_{k=1}^{\lfloor \varepsilon' n \rfloor-1} (k/n) |\gamma_{k}|^2 \leq \varepsilon',
$$
where we have used (\ref{l1l2}) in deriving (a). It then follows that for $T$ and $n$
as above,
\begin{align*}
\frac{\|A_{T, n}-\hat{A}_{T, n}\|_F^2}{T} & \leq \left(2 \sum_{k=1}^{\lfloor \varepsilon' n \rfloor} (k/n) |\gamma_{k}| + 2 \sum_{k=\lfloor \varepsilon' n \rfloor}^{n-1} (k/n) |\gamma_{k}| \right) \frac{\max\{\gamma_k\}}{T/n}\\
& \leq \left(2 \sum_{k=1}^{\lfloor \varepsilon' n \rfloor} (k/n) |\gamma_{k}| + 2 \sum_{k=\lfloor \varepsilon' n \rfloor}^{\infty} |\gamma_{k}| \right) \frac{\max\{\gamma_k\}}{T/n}\\
& \leq 4 \varepsilon' \|R\|_{\infty},
\end{align*}
establishing Step $3$.

{\bf Step 4.} In this step, fixing a polynomial $p(x)$, we show that for any $\varepsilon > 0$, there exists $T_0 > 0$ such that for any $T \geq T_0$, there exists $n_0 > 0$ such that for all $n \geq n_0$,
$$
\left| \frac{\sum_{m=1}^n p(\psi_m)}{T}- \frac{\sum_{m=1}^n p(\hat{\psi}_m)}{T} \right| \leq \varepsilon.
$$

To achieve this goal, it suffices to prove that, given any fixed $k$, for any $\varepsilon > 0$, one can first fix a large enough $T$ and then choose a large enough $n$ such that
$$
\left| \frac{\sum_{m=1}^n \psi_m^k}{T}- \frac{\sum_{m=1}^n \hat{\psi}_m^k}{T} \right| \leq \varepsilon,
$$
which is equivalent to
$$
\left| \frac{tr(A_{T, n}^k-\hat{A}_{T, n}^k)}{T} \right| \leq \varepsilon.
$$
First of all, we note that
$$
A_{T, n}^k-\hat{A}_{T, n}^k=(A_{T, n}^k-A_{T, n}^{k-1}\hat{A}_{T, n})+(A_{T, n}^{k-1}\hat{A}_{T, n}-A_{T, n}^{k-2}\hat{A}_{T, n}^2)+\dots+(A_{T, n} \hat{A}_{T, n}^{k-1}- \hat{A}_{T, n}^k).
$$
And for the first term, using the well-known fact that for any two compatible matrices $E_1, E_2$,
$$
(tr(E_1 E_2))^2 \leq \|E_1\|_F^2 \|E_2\|_F^2, \quad \|E_1 E_2\|_F^2 \leq \|E_1\|_2^2 \|E_2\|_F^2,
$$
we deduce that
\begin{align*}
\left( \frac{tr(A_{T, n}^k-A_{T, n}^{k-1} \hat{A}_{T, n})}{T} \right)^2 & = \left( \frac{tr(A_{T, n}^{k-1}(A_{T, n}-\hat{A}_{T, n})}{T} \right)^2 \\
                                                   & \leq \frac{\|A_{T, n}^{k-1}\|_F^2 \|A_{T, n}-\hat{A}_{T, n}\|_F^2}{T^2} \\
                                                   & \leq \|A_{T, n}\|_2^{2(k-2)} \frac{\|A_{T, n}\|_F^2}{T} \frac{\|A_{T, n}-\hat{A}_{T, n}\|_F^2}{T}.
\end{align*}
It then follows from Steps $1$, $2$ and $3$ that for any $\varepsilon' > 0$, one can first fix a large enough $T$ and then choose a large enough $n$ such that
$$
\left|\frac{tr(A_{T, n}^k-A_{T, n}^{k-1} \hat{A}_{T, n})}{T}\right| < \varepsilon'.
$$
A completely parallel argument can be used to establish the same statement for other terms, which in turn implies our goal in this step.

{\bf Step 5.} In this step, we finish the proof of the theorem. First of all, let $\{\varepsilon_k\}$ be a monotone decreasing sequence of positive real numbers convergent to $0$. For any $\varepsilon_k > 0$, we first arbitrarily choose a monotone increasing sequence $\{T_k\}$ of positive real numbers divergent to infinity, and then, applying Theorem~\ref{sampling-theorem-1}, choose $n_k$ for each $T_k$ such that
\begin{equation} \label{two-close-sequences}
\left|\frac{1}{T_k} I(M; Y_0^{T_k}) - \frac{1}{T_k} I(M;
Y(\Delta_{T, n_k}))\right| \leq \varepsilon_k.
\end{equation}
Then, applying the Weierstrass approximation theorem to the
continuous function $\log(1+x)/x$, we choose two polynomials
$p_1^{(k)}(x), p_2^{(k)}(x)$ such that for all $x \in [0, 2
\int_{-\infty}^{\infty} |R(\tau)| d\tau]$,
\begin{equation} \label{p1-p2}
p_1^{(k)}(x) \leq \log (1+x) \leq p_2^{(k)}(x) \mbox{  and  }
p_2^{(k)}(x)-p_1^{(k)}(x) \leq \varepsilon_k x,
\end{equation}
which obviously leads to
\begin{equation} \label{psi-bounds}
\frac{\sum_{m=1}^{n_k} p_1^{(k)}(\psi_m)}{T_k} \leq
\frac{\sum_{m=1}^{n_k} \log (1+\psi_m)}{T_k} \leq
\frac{\sum_{m=1}^{n_k} p_1^{(k)}(\psi_m)}{T_k},
\end{equation}
\begin{equation} \label{psi-hat-bounds}
\frac{\sum_{m=1}^{n_k} p_1^{(k)}(\hat{\psi}_m)}{T_k} \leq
\frac{\sum_{m=1}^{n_k} \log (1+\hat{\psi}_m)}{T_k} \leq
\frac{\sum_{m=1}^{n_k} p_1^{(k)}(\hat{\psi}_m)}{T_k}.
\end{equation}
Re-choosing a larger $T_k$ first and then a larger $n_k$ if
necessary, we have, by Step $4$,
\begin{equation} \label{bounds-equal}
\lim_{k \to \infty} \frac{\sum_{m=1}^{n_k} p_1^{(k)}(\psi_m)}{T_k} =
\lim_{k \to \infty} \frac{\sum_{m=1}^{n_k}
p_1^{(k)}(\hat{\psi}_m)}{T_k}, \quad \lim_{k \to \infty}
\frac{\sum_{m=1}^{n_k} p_2^{(k)}(\psi_m)}{T_k} = \lim_{k \to \infty}
\frac{\sum_{m=1}^{n_k} p_2^{(k)}(\hat{\psi}_m)}{T_k}.
\end{equation}
Now, as elaborated in Appendix~\ref{Polynomial-PSD-Proof}, one can
show that (again re-choosing $n_k$ for each $T_k$ if necessary),
\begin{equation} \label{Polynomial-PSD}
\lim_{k \to \infty} \frac{\sum_{m=1}^{n_k} p_1^{(k)}(\hat{\psi}_m)}{T_k} = \frac{1}{2 \pi} \int p_1^{(k)}(2 \pi f(x)) dx, \quad \lim_{k \to \infty} \frac{\sum_{m=1}^{n_k} p_2^{(k)}(\hat{\psi}_m)}{T_k}=\frac{1}{2 \pi} \int p_2^{(k)}(2 \pi f(x)) dx.
\end{equation}
And moreover, from (\ref{p1-p2}), we deduce that
$$
\int (p_2^{(k)}(2 \pi f(x)) - p_2^{(k)}(2 \pi f(x))) dx \leq 2 \pi
\varepsilon_k \int f(x) dx.
$$
This, together with the integrability of $f(\cdot)$, implies that
\begin{equation} \label{left-equal-right}
\lim_{k \to \infty} \int p_1^{(k)}(2 \pi f(x)) dx = \lim_{k \to
\infty} \int p_2^{(k)}(2 \pi f(x)) dx,
\end{equation}
which, together with (\ref{psi-bounds}), (\ref{psi-hat-bounds}),
(\ref{bounds-equal}) and (\ref{Polynomial-PSD}), implies that
\begin{equation} \label{110}
\hspace{-1cm} \lim_{k \to \infty} \frac{\sum_{m=1}^{n_k} \log
(1+\psi_m)}{T_k} = \lim_{k \to \infty} \frac{\sum_{m=1}^{n_k} \log
(1+\hat{\psi}_m)}{T_k}=\frac{1}{2 \pi} \lim_{k \to \infty} \int
p_1^{(k)}(2 \pi f(x)) dx.
\end{equation}
Finally, similarly as in the derivation of (\ref{left-equal-right}), using (\ref{p1-p2}) and the integrability of $f(\cdot)$, we
conclude that
$$
\lim_{k \to \infty} \int p_1^{(k)}(2 \pi f(x)) dx = \int \log(1+2 \pi f(x)) dx,
$$
which, together with (\ref{110}), (\ref{no-hat}) and (\ref{two-close-sequences}), implies that
$$
\lim_{k \to \infty} \frac{1}{T_k} I(M; Y_0^{T_k}) = \frac{1}{4 \pi} \int_{-\infty}^{\infty} \log (1+ 2 \pi f(\lambda)) d \lambda.
$$
The theorem then immediately follows from a typical subsequence argument, as desired.

\end{proof}

\section{Concluding Remarks} \label{concluding-remarks}

Some remarks about the approach employed in this work are in order.

First, echoing~\cite{LiuHan2018}, we emphasize that time sampling, which is the key ingredient in our approach, ensures the inheritance of causality in converting a continuous-time Gaussian channel to its discrete-time versions, which stands in contrast to the orthogonal expansion representation in some previous approaches that destroys the temporal causality in the conversion process.

More technically, we note that our proof of Theorem~\ref{main-theorem} has actually established that one can appropriately ``scale'' $\{T_k\}$ and $\{n_k\}$ with $T_k/n_k$ shrinking to $0$ as $k$ tends to infinity ({\it i.e.,} the sampling gets finer) such that
$$
\lim_{k \to \infty} \frac{\log \det \left(I_n+ A_{T_k, n_k} \right)}{T_k} = \frac{1}{2\pi} \int \log(1+2 \pi f(x)) dx.
$$
Such a result can be regarded as a ``scalable'' version of Szego's theorem, which seems to serve as a bridge connecting discrete-time and continuous-time Szego's theorems.

As argued above, we believe that, other than recovering a classical information-theoretic formula with an elementary proof, our approach promises further applications in more general settings, which, for instance, include possible extensions of the formula (\ref{Ihara}) to continuous-time Gaussian channels with feedback and memory~\cite{ih93}, or with multi-users~\cite{LiuHan2018}, or multi-inputs and multiple-outputs~\cite{Brandenburg1974}.

\bigskip

{\bf Acknowledgement.} We would like to thank Professor Shunsuke Ihara for insightful discussions and for pointing out relevant references.

\section{Appendices} \appendix

\section{Proof of (\ref{Polynomial-PSD})} \label{Polynomial-PSD-Proof}

To prove (\ref{Polynomial-PSD}), it suffices to prove that for any
$q=1, 2, \dots$,
\begin{equation} \label{Power-PSD}
\lim_{k \to \infty} \frac{\sum_{m=1}^{n_k} \hat{\psi}_m^q}{T_k} =
\frac{1}{2 \pi} \int (2 \pi f(x))^q dx.
\end{equation}

For illustrative purposes, we now prove (\ref{Power-PSD}) for the case that $q=2$ in great detail. First of all, by (\ref{hat-psi-m}), we have,
for any $m=1, 2, \dots, n_k$,
\begin{equation} \label{hat-psi-m-square}
\hat{\psi}_m^2= \sum_{j_1=0}^{n_k-1} \sum_{j_2=0}^{n_k-1} \hat{\gamma}_{j_1}
\hat{\gamma}_{j_2} e^{-2\pi i m (j_1+j_2)/n_k}
\end{equation}
and furthermore
\begin{align*}
\sum_{m=1}^{n_k} \hat{\psi}_m^2 & = \sum_{m=1}^{n_k} \sum_{j_1=0}^{n_k-1} \sum_{j_2=0}^{n_k-1} \hat{\gamma}_{j_1} \hat{\gamma}_{j_2} e^{-2\pi i m (j_1+j_2)/n_k}\\
                            & = \sum_{l=0}^{n_k-1} \sum_{j=0}^{n_k-1} \hat{\gamma}_{j_1} \hat{\gamma}_{j_2} \sum_{m=1}^{n_k}  e^{-2\pi i m (j_1+j_2)/n_k}\\
                            & \stackrel{(a)}{=} n_k (\hat{\gamma}_0^2 + \sum_{l=1}^{n_k-1} \hat{\gamma}_l \hat{\gamma}_{n_k-l})\\
                            & = n_k (\sum_{l=-(n_k-1)}^{n_k-1} \gamma_l^2+ 2 \sum_{l=1}^{n_k-1}
\gamma_l \gamma_{n_k-l}),
\end{align*}
where for (a), we have used the easily verifiable fact that if $l+j$ is equal to $0$ or $n_k$, then $\sum_{m=1}^{n_k}  e^{-2\pi i m (l+j)/n_k}$ is equal to $n_k$, and $0$ otherwise. Noting that
$$
\gamma_l = \E\left[ \frac{n_k}{T_k} \int_{t_0}^{t_1}
\int_{t_l}^{t_{l+1}} X(s) X(s') ds ds' \right] = \frac{n_k}{T_k}
\int_{t_0}^{t_1} \int_{t_l}^{t_{l+1}} R(s'-s) ds ds',
$$
with a routine continuity argument using the definition of integral, we arrive at
\begin{equation}  \label{part-1}
\lim_{k \to \infty} \frac{n_k \sum_{l=-(n_k-1)}^{n_k-1}
\gamma_l^2}{T_k} = \lim_{k \to \infty} \int_{-T_k}^{T_k} R^2(s) ds =
\int_{-\infty}^{\infty} R(s) R(-s) ds = \frac{1}{2 \pi} \int (2 \pi
f(\lambda))^2 d \lambda,
\end{equation}
where we have used the uniform boundedness and uniform continuity of
$R(\cdot)$ for the first equality, and the last equality follows
from the fact that $f^2(\cdot)$ and $R*R(\cdot)$ are a Fourier
transform pair. Moreover, using the absolute summability of
$\{\gamma_l\}$ and the fact that $\lim_{\tau \to \infty} R(\tau)=0$
(this follows from the Riemann-Lesbegue lemma), we have
\begin{equation} \label{part-2}
\lim_{k \to \infty} \frac{n_k(2 \gamma_1 \gamma_{n-1} + \dots+ 2
\gamma_{n-1} \gamma_1)}{T_k} = 0.
\end{equation}
It then follows from (\ref{part-1}) and (\ref{part-2}) that
$$
\lim_{k \to \infty} \frac{\sum_{m=1}^{n_k} \hat{\psi}_m^2}{T_k} =
\frac{1}{2 \pi} \int (2 \pi f(\lambda))^2 d \lambda,
$$
establishing (\ref{Power-PSD}) for the case that $q=2$.

We next prove (\ref{Power-PSD}) for a general $q \geq 2$. Since the arguments are more tedious than yet
completely parallel to the case that $q=2$, we only outline the major steps below. In a parallel manner as above, we have, for any  $q \geq 2$,
\begin{align} \label{S1S2}
\sum_{m=1}^{n_k} \hat{\psi}_m^q & = \sum_{m=1}^{n_k} \sum_{j_1=0}^{n_k-1} \sum_{j_2=0}^{n_k-1} \dots \sum_{j_q=0}^{n_k-1} \hat{\gamma}_{j_1} \hat{\gamma}_{j_2} \dots \hat{\gamma}_{j_q}  e^{-2\pi i m (j_1+j_2+\dots+j_q)/n_k} \nonumber\\
& =  \sum_{j_1=0}^{n_k-1} \sum_{j_2=0}^{n_k-1} \dots \sum_{j_q=0}^{n_k-1} \hat{\gamma}_{j_1} \hat{\gamma}_{j_2} \dots \hat{\gamma}_{j_q} \sum_{m=1}^{n_k} e^{-2\pi i m (j_1+j_2+\dots+j_q)/n_k} \nonumber \\
&= n_k \sum_{j_1+j_2+\dots+j_q = 0, \; n_k, \dots, \; (q-1) n_k} \hat{\gamma}_{j_1} \hat{\gamma}_{j_2} \dots \hat{\gamma}_{j_q} \nonumber \\
& = n_k (S_1+S_2),
\end{align}
where $S_1$ is the summation of all terms taking the form of $\gamma_{j_1} \gamma_{j_2} \dots \gamma_{j_q}$ satisfying $j_1+j_2+\dots+j_{q-1}=j_q$ and $S_2$ is the summation of all the ``remaining'' terms. Then, similarly as in deriving (\ref{part-1}), we deduce that
\begin{equation}  \label{general-part-1}
\lim_{k \to \infty} \frac{n_k S_1}{T_k} = R*R*\dots*R(0) = \frac{1}{2 \pi} \int (2 \pi
f(\lambda))^q d \lambda.
\end{equation}
And similarly as in deriving (\ref{part-2}), we deduce that
\begin{equation} \label{general-part-2}
\lim_{k \to \infty} \frac{n_k S_2}{T_k} = 0.
\end{equation}
Finally, (\ref{Power-PSD}) then follows from (\ref{S1S2}), (\ref{general-part-1}) and (\ref{general-part-2}), which in turn implies (\ref{Polynomial-PSD}), as desired.

\end{document}